\newlength{\dhatheight}
\theoremstyle{definition}
\newtheorem{exmp}{Example}
\newtheorem{problem}{Problem}
\theoremstyle{definition}
\newtheorem{definition}{Definition}
\newtheorem{theorem}{Theorem}
\newtheorem{corollary}{Corollary}
\newtheorem{lemma}{Lemma}
\newtheorem{remark}{Remark}
\newcommand{\LL}[1]{\textcolor{blue}{[LL: #1]}}
\newcommand{\EF}[1]{\textcolor{red}{[EF: #1]}}
\newcommand{\AP}[1]{\textcolor{orange}{[AP: #1]}}
\title{\LARGE \bf
Uncertainty Propagation in Stochastic Systems via Mixture Models \\with Error Quantification
}
\author{Eduardo Figueiredo, Andrea Patane, Morteza Lahijanian, and  Luca Laurenti
\thanks{E. Figueiredo and L. Laurenti are with the Delft Center for Systems and Control, TU Delft. Corresponding author's email:
        {\tt\small e.figueiredo@tudelft.nl}}%
\thanks{M. Lahijanian is with the Dept. of Aerospace Eng. Sciences and Computer Science, CU Boulder}%
\thanks{A. Patane is with the School of Computer Science and Statistics, Trinity College Dublin}%
}
\begin{document}

\maketitle
\thispagestyle{plain}
\pagestyle{plain}

\begin{abstract}
Uncertainty propagation in non-linear dynamical systems has become a key problem in various fields including control theory and machine learning. In this work, we focus on discrete-time non-linear stochastic dynamical systems. We present a novel approach to approximate the distribution of the system over a given finite time horizon with a mixture of distributions. The key novelty of our approach is that it not only provides tractable approximations for the distribution of a non-linear stochastic system but also comes with formal guarantees of correctness. In particular, we consider the Total Variation (TV) distance to quantify the distance between two distributions and derive an upper bound on the TV between the distribution of the original system and the approximating mixture distribution derived from our framework. We show that in various cases of interest, including in the case of Gaussian noise, the resulting bound can be efficiently computed in closed form. This allows us to quantify the correctness of the approximation and to optimize the parameters of the resulting mixture distribution to minimize such distance.  The effectiveness of our approach is illustrated on several benchmarks from the control community.

\end{abstract}

\section{Introduction}
Modern autonomous systems are becoming increasingly complex, commonly including non-linearities, data-driven components, and uncertain dynamics. The uncertainty affecting these systems can come from various sources including inherent probabilistic dynamics, lack of knowledge on parts of the model\cite{alamir2022learning}, and interactions with an uncertain environment \cite{pairet2021online}. 
Consequently, when autonomous systems need to be employed for safety-critical applications, where failure may translate into costly or deadly accidents, it becomes paramount to consider models that include non-linear stochastic dynamics. Unfortunately, for non-linear stochastic dynamical models, the distribution of the system at a given time cannot be computed in closed-form \cite{girard2002gaussian}. 
This leads to the main research question of this paper: Can we develop an efficient framework for uncertainty propagation in non-linear stochastic dynamical systems with formal guarantees of correctness?

Various methods have been proposed to propagate uncertainty in non-linear stochastic systems \cite{landgraf2023probabilistic}. These either consider analytical approximation methods, such as Taylor expansions or Stirling's interpolation \cite{einicke2012smoothing, schei1997finite}, to approximate the one-step dynamics of the systems, or consider numerical approximation of the various integrals \cite{dunik2020state, ito2000gaussian} and/or only propagate discrete samples from the system distribution \cite{lui1998sequential, arulampalam2002tutorial}. However, none of these methods comes with error bounds on the approximation error. An exception is \cite{polymenakos2020safety}, which is however only limited to Gaussian processes with additive noise and can only guarantee that with high probabilities the system trajectories will stay close to the mean. An alternative approach proposed in \cite{jasour2021moment} is to rely on moment matching to propagate the moments of non-linear stochastic dynamical systems with non-linearities that are either polynomial or trigonometric functions. 
However, in practice, it is often required to propagate the full distribution, and not just the moments, e.g., for chance constraint satisfaction in planning problems \cite{pairet2021online}. Additionally, we should stress that, because of their universal approximation properties, various papers already attempted to use mixture distributions, and Gaussian Mixture Models (GMMs) in particular, to approximate the distribution of a stochastic dynamical system \cite{weissel2009stochastic, terejanu2008uncertainty}. However, these works are mostly based on heuristics on how to build the mixture and come with no guarantees of correctness. 


In this work, given an error threshold $\delta >0$ and a time horizon $T,$ we develop a framework that for each time step $t\leq T$  builds a mixture distribution approximation of a given non-linear stochastic dynamical system, which is guaranteed to be $\delta-$close to the distribution of the system at that time. To quantify the closeness between two distributions we consider the Total Variation (TV) distance, defined as the largest absolute difference between the probabilities that two probability distributions assign to an event \cite{gibbs2002choosing}. Interestingly, we show that by appropriately picking the components of the approximating mixture distribution, an upper bound for its TV from the unknown distribution of the system at time $t$ can be computed in closed form. Using this bound, we then propose a refinement algorithm that iteratively builds mixture models by efficiently placing distributions in regions of the state space by minimizing the TV bound. We empirically illustrate the efficacy of our framework in various benchmarks.

In summary, the main contributions of this work are: i) a novel framework for computing a mixture distribution approximating the distribution of a stochastic dynamical system at a given time, ii) formal bounds of correctness in terms of the TV distance and a proof of convergence of the approximating error to $0$,  iii) an adaptive grid refinement algorithm for bound optimization, and iv) empirical evaluation of the efficacy of our framework on four case studies including a Dubin's car model and an application on a planning problem.


\section{Problem Formulation}

\subsection{Notation}

Given a Borel measurable space $\mathcal{X} \subseteq \mathbb{R}^q$ we denote by $\mathcal{P}(\mathcal{X})$ the set of probability distributions on $\mathcal{X}$.
For a random variable $x_{t}$ taking values in $\mathcal{X}$,  we denote by $\mathbb{P}_{x_t} \in \mathcal{P}(\mathcal{X})$ the probability measure associated to $x_t$, 
and by $p_{x_t}$ the probability density function associated to $\mathbb{P}_{x_t}$. For all the probability distributions considered in this paper, we always assume that they have a density. 
Additionally, given a set of $K$ probability distributions $\mathbb{P}_{\theta_1},...,\mathbb{P}_{\theta_K}\in \mathcal{P}(\mathcal{X})$ and a set of weights $\omega_1,...,\omega_K \in [0,1]$ such that $\sum_{k=1}^K \omega_k = 1$, we define a \emph{mixture distribution} of size $K>0$ as the distribution $\hat{\mathbb{P}} =  \sum_{k=1}^K  \omega_k \mathbb{P}_{\theta_k}$. If each $\mathbb{P}_{\theta_k}$ is a Gaussian distribution, then $\hat{\mathbb{P}}$ is called a \emph{Gaussian mixture model} (GMM) or \emph{Gaussian mixture distribution}.
Finally, for a probability distribution ${\mathbb{P}} \in \mathcal{P}(\mathcal{X})$ and a measurable function $g: \mathcal{X} \rightarrow \mathcal{Y}$, where $\mathcal{Y} \subseteq \mathbb{R}^{d}$, we denote the push-forward measure of ${\mathbb{P}}$ by $g$ as $g \# {\mathbb{P}}$ such that for all $A \subset \mathcal{Y}$, $(g \# \mathbb{P})(A) := \mathbb{P}(g^{-1}(A))$. We note that $g \# \mathbb{P}$ is still a probability distribution such that $g \# \mathbb{P} \in \mathcal{P}(\mathcal{Y})$.

\subsection{System Description}


We consider a  discrete-time stochastic process described by the following stochastic difference equation
\begin{align} \label{eq:original-problem}
    {x}_{t+1} = g({x}_{t}, {\varepsilon}_{t}) \text{,}\qquad  {x}_{0} \sim \mathbb{P}_{x_0}, \, {\varepsilon}_{t} \sim \mathbb{P}_{ {\varepsilon}}, 
\end{align}
where $g: \mathbb{R}^{d} \times \mathbb{R}^{q} \rightarrow \mathbb{R}^{d}$ is a possibly non-linear continuous function representing the one-step dynamics of System \eqref{eq:original-problem}, $x_0$ is a random initial condition with distribution $\mathbb{P}_{x_0}\in \mathcal{P}(\mathbb{R}^d)$, and $ {\varepsilon}_t$ is a noise term distributed according to a absolutely continuous distribution independent and identically distributed at each time step\footnote{We remark that the assumption of time invariance of the noise distribution is only introduced to simplify notation and our methods naturally extend to time-varying noise distributions, as we will emphasize in Remark \ref{rmk:homoscedasticity-variance} in Section \ref{section:approx-scheme}. } $\mathbb{P}_{ {\varepsilon}}\in \mathcal{P}(\mathbb{R}^q)$. Intuitively, System \eqref{eq:original-problem} represents a general model of a $d-$dimensional discrete-time stochastic system with possible non-linearities both in the state and noise.

In order to study the time evolution of  System \eqref{eq:original-problem}, we first introduce $\mathbb{T}_{x}$, the one-step transition kernel of System \eqref{eq:original-problem}, which for any time $t$ represents the probability distribution of $x_t$ given that $x_{t-1}=x$ and is formally defined as:
$$\mathbb{T}_{x}={g}( {x},  \cdot ) \# \mathbb{P}_{ {\varepsilon}}.$$
For simplicity, with an abuse of notation, we denote by $p(y \mid x)$ the probability density associated to $\mathbb{T}_{x}$.

\begin{exmp} \label{ex:gaussian-noise-explanation}
In the case where $g(x,\varepsilon)=f(x)+ \varepsilon$ and the noise distribution is $\mathbb{P}_{ {\varepsilon}} = \mathcal{N}( {0}, \Sigma_{ {\varepsilon}})$, that is, a zero mean Gaussian with covariance matrix $\Sigma_{ {\varepsilon}}$,  we have that $\mathbb{T}_{x} = \mathcal{N}(f( {x}), \Sigma_{ {\varepsilon}})$, a Gaussian distribution with mean $f(x)$ and covariance matrix $\Sigma_{ {\varepsilon}}$.  
\end{exmp}

The probability distribution of System \eqref{eq:original-problem} at time  $t$, $\mathbb{P}_{x_t}$, can be described by its density $p_{{x}_{t}},$ which can be computed iteratively over time by using the following Chapman-Kolmogorov equation \cite{girard2002gaussian,weissel2009stochastic}:
\begin{align}
& \label{Eqn:ChapmanKolmogorov}
p_{x_t}(y) = \int_{\mathbb{R}^d} p(y\mid x) p_{x_{t-1}}(x) dx.
\end{align}
That is, $p_{{x}_{t}}$ is obtained by marginalizing the one-step transition kernel w.r.t. $p_{{x}_{t-1}}.$
Unfortunately, even though  $p(y\mid x)$ can be expressed in closed form in many cases of interest,  Eqn 
\eqref{Eqn:ChapmanKolmogorov} is in general not tractable. For instance, this is the case for the system in Example \ref{ex:gaussian-noise-explanation} where $p(y\mid x)$ is Gaussian, but $p_{x_t}$ is intractable and non-Gaussian if $f$ is non-linear. Consequently, approximation schemes are required.

\subsection{Problem Statement}
Our goal in this paper is to design an approximating mixture distribution $\hat{\mathbb{P}}_{x_t}$, which is close to the actual probability distribution $\mathbb{P}_{x_t}$, but that is also tractable and can be computed analytically in closed form. 
To do that, we need to introduce a metric to quantify the distance between probability distributions. While many metrics have been introduced for this task \cite{gibbs2002choosing}, in this paper, we focus on the Total Variation distance.

\begin{definition}[Total Variation] \label{def:tv-distance}
Let  $(\mathcal{X}, \mathcal{B}(\mathcal{X}))$ be a measurable space, and  $\mathbb{P},\mathbb{Q} \in \mathcal{P}(\mathcal{X})$ be probability distributions. Then, the Total Variation distance between  $\mathbb{P}$ and $\mathbb{Q}$ is defined as:
$$
\text{TV}(\mathbb{P}, \mathbb{Q}) := \sup_{A \in \mathcal{B}(\mathcal{X})} |\mathbb{P}(A) - \mathbb{Q}(A)|.
$$
%
\end{definition}

Intuitively, the TV is the largest difference that the probability of the same event can have according to two probability distributions. Consequently, because of its intuitive and worst-case nature, the TV represents a particularly convenient choice to quantify the distance between the distribution of stochastic dynamical systems. 
We should also stress that, differently from the TV case,  closeness in other measures commonly employed, such as Wasserstein distance, does not necessarily imply closeness in the probability of events \cite{gibbs2002choosing}.

We are finally ready to formulate
our problem statement.

\begin{problem}
\label{prob:MainProb}
Consider a finite time horizon  $ \{ 1, ..., T \} \subset \mathbb{N}$, and a threshold $\delta > 0.$ Find a set of mixture distributions of finite size $\hat{\mathbb{P}}_{x_1},...,\hat{\mathbb{P}}_{x_T}$ such that for all $t\in \{ 1, ..., T \}$ it holds that:
\begin{equation} \label{eq:tv-threshold}
    \text{TV} \left( \mathbb{P}_{x_t}, \hat{\mathbb{P}}_{x_t} \right) \leq \delta.
\end{equation}
\end{problem}

Note that Problem \ref{prob:MainProb} is stated for a general mixture distribution $\hat{\mathbb{P}}_{x_t}$. This is because depending on the one-step transition kernel of System \eqref{eq:original-problem}, one may want to focus on different classes of mixture distributions. Nevertheless,  because of their modeling flexibility and because of the existence of TV bounds between two Gaussian distributions \cite{devroye2018total}, in this paper we will give particular attention to GMMs.

We should also note that solving Problem \ref{prob:MainProb} is particularly challenging not only because Eqn \eqref{Eqn:ChapmanKolmogorov} cannot be solved in closed form, but also because it requires propagating over time the uncertainty introduced by each approximation step. In fact, according to Eqn \eqref{Eqn:ChapmanKolmogorov}, to compute $\mathbb{P}_{x_t}$, one needs to marginalize the one step transition kernel wrt $\mathbb{P}_{x_{t-1}},$ which is unknown for any $t>0$.

\paragraph{Approach}
To solve Problem \ref{prob:MainProb}, we propose a stochastic approximation scheme consisting of an iterative process based on the approximation of Eqn \eqref{Eqn:ChapmanKolmogorov} with a mixture distribution. Intuitively, we partition $\mathbb{R}^d$ in various subspaces $\mathcal{X}_k$ and select representative points $x^{(k)} \in \mathcal{X}_k$ such that $\int_{\mathbb{R}^d} p(y \mid x) p_{x_t}(x)dx$ is close to $\sum_{k=1}^{K} p(y \mid x^{(k)}) \int_{\mathcal{X}_k} p_{x_t}(x)dx$. 
 In Section \ref{section:approx-scheme}, we detail our stochastic approximation scheme, and we state proof of its correctness and convergence to the distribution of System \eqref{eq:original-problem}. In Section \ref{section:algorithm}, we present the resulting algorithm used to build the mixture approximations. Finally, in Section \ref{section:exp-results}, we present numerical experiments.

\section{Stochastic Approximation Scheme} \label{section:approx-scheme}

To solve Problem \ref{prob:MainProb}, we introduce a stochastic approximation scheme based on iteratively approximating the distribution of System \eqref{eq:original-problem} by mixtures of the one-step transition kernel. 
In particular, for each time step $t$ we consider a partition of $\mathbb{R}^d$ in $K_t$ non-overlapping regions $\bm{\mathcal{X}}_{t} := \{ \mathcal{X}^{(1)}_{t}, ..., \mathcal{X}^{(K_t)}_{t} \}$ and for each $\mathcal{X}^{(k)}_{t} \in \bm{\mathcal{X}}_{t}$ we consider a representative point $x^{(k)}_{t} \in \mathcal{X}^{(k)}_{t}.$ Then, we define the approximating mixture distribution $\hat{\mathbb{P}}_{x_t}$ iteratively as follows:
\begin{align} \nonumber
    &  \hat{\mathbb{P}}_{x_0} := \mathbb{P}_{x_0} \\
    &  \hat{\mathbb{P}}_{x_t} := \sum_{k = 1}^{K_{t-1}} \omega^{(k)}_{t} {\mathbb{T}}_{x^{(k)}_{t-1}} \label{eq:approximation-scheme}
\end{align}
with $\omega^{(k)}_{t} := \hat{\mathbb{P}}_{x_{t-1}}({\mathcal{X}^{(k)}_{t-1}}).$ Intuitively, at each time $t$, Eqn \eqref{eq:approximation-scheme} builds a mixture approximation of $\mathbb{P}_{x_t}$ by averaging the one-step transition kernel of System \eqref{eq:original-problem} with weights that depend on the approximating mixture distribution at the previous time step. 

\begin{exmp} \label{ex:propagation-example-gaussian}
For the system dynamics of Example \ref{ex:gaussian-noise-explanation}, Eqn \eqref{eq:approximation-scheme} leads to a mixture distribution approximation at time step $t$ given by the GMM $\hat{\mathbb{P}}_{x_t} = \sum_{k=1}^{K_{t-1}} \omega^{(k)}_{t} \mathcal{N}(f(x^{(k)}_{t-1}), \Sigma_{\varepsilon})$, where $\omega^{(k)}_{t} = \hat{\mathbb{P}}_{x_{t-1}}({\mathcal{X}^{(k)}_{t-1}})$ is the probability mass of the GMM $\hat{\mathbb{P}}_{x_{t-1}} = \sum_{k=1}^{K_{t-2}} \omega^{(k)}_{t-1} \mathcal{N}(f(x^{(k)}_{t-2}), \Sigma_{\varepsilon})$ (built in the previous step) within the region $\mathcal{X}^{(k)}_{t-1}$.
\end{exmp}

\begin{remark}
Note that the mixture $\hat{\mathbb{P}}_{x_t}$ as defined in Eqn \eqref{eq:approximation-scheme} is not necessarily a Gaussian mixture. If for the particular application it would be preferable to obtain a GMM, then this can always be obtained by approximating each ${\mathbb{T}}_{x^{(k)}_{t-1}}$ in Eqn \eqref{eq:approximation-scheme} with a Gaussian (mixture) distribution. As the total variation is a distance that satisfies the triangle inequality, then the bounds on the approximation error developed in the rest of this section would hold also for this setting, with a correction term depending on the closeness of each  ${\mathbb{T}}_{x^{(k)}_{t-1}}$ to the respective Gaussian or Gaussian mixture approximation.
\end{remark}

It is obvious that the quality of the approximation obtained from Eqn \eqref{eq:approximation-scheme} depends on the choice of the weights and representative points. How to perform this choice is the object of Section \ref{section:algorithm}. In the remainder of this section, we focus on quantifying $\text{TV}(\hat{\mathbb{P}}_{x_t},{\mathbb{P}}_{x_t})$. 


\subsection{Total variation bounds}

We start our analysis with Theorem \ref{thm:general-case}, where we bound $\text{TV}(\hat{\mathbb{P}}_{x_t},{\mathbb{P}}_{x_t})$. Proofs for Theorems and Lemmas of this Section can be found in the Section \ref{section:proofs}.

\begin{theorem} \label{thm:general-case}
Let function $s \big( x, x^{(k)} \big)$ be defined as
\begin{equation}
s \big( x, x^{(k)} \big) := \frac{1}{2} \int_{\mathbb{R}^d} \big| p(y \mid x) - p(y \mid x^{(k)}) \big| dy
\end{equation}
Then, for any $t>0$ it holds that 
\begin{equation} \label{eq:tvBoundGeneralCase}
\begin{split}
& \text{TV}  \left( \mathbb{P}_{x_t}, \hat{\mathbb{P}}_{x_t} \right) \leq \text{TV} \left( \mathbb{P}_{x_{t-1}}, \hat{\mathbb{P}}_{x_{t-1}} \right) + \\ 
&  \qquad \sum_{k=1}^{K_{t-1}}  \max_{{x}\in \mathcal{X}^{(k)}_{t-1}} s\big( {x}, x^{(k)}_{t-1} \big)  \hat{\mathbb{P}}_{x_{t-1}}(\mathcal{X}^{(k)}_{t-1}),
\end{split}
\end{equation}
where $\text{TV} \left( \mathbb{P}_{x_0}, \hat{\mathbb{P}}_{x_0} \right)=0$.
\small
\normalsize
\end{theorem}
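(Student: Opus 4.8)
The plan is to prove \eqref{eq:tvBoundGeneralCase} by induction on $t$, using throughout the $L^1$-characterisation of the total variation: for distributions with densities $p,q$ one has $\text{TV}(\mathbb{P},\mathbb{Q}) = \tfrac12 \int_{\mathbb{R}^d} |p(y)-q(y)|\,dy$. The base case $t=0$ is immediate, since $\hat{\mathbb{P}}_{x_0} = \mathbb{P}_{x_0}$ by definition in \eqref{eq:approximation-scheme}, hence $\text{TV}(\mathbb{P}_{x_0},\hat{\mathbb{P}}_{x_0}) = 0$. For the inductive step the key idea is to interpose an auxiliary distribution $\tilde{\mathbb{P}}_{x_t}$ between $\mathbb{P}_{x_t}$ and $\hat{\mathbb{P}}_{x_t}$, namely the one obtained by marginalising the \emph{true} one-step transition kernel against the \emph{approximate} density $\hat p_{x_{t-1}}$ from the previous step: $\tilde p_{x_t}(y) := \int_{\mathbb{R}^d} p(y\mid x)\,\hat p_{x_{t-1}}(x)\,dx$. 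By the triangle inequality for $\text{TV}$, it suffices to bound $\text{TV}(\mathbb{P}_{x_t},\tilde{\mathbb{P}}_{x_t})$ by the first summand and $\text{TV}(\tilde{\mathbb{P}}_{x_t},\hat{\mathbb{P}}_{x_t})$ by the second summand in \eqref{eq:tvBoundGeneralCase}.

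For the first term, both $\mathbb{P}_{x_t}$ (via the Chapman--Kolmogorov identity \eqref{Eqn:ChapmanKolmogorov}) and $\tilde{\mathbb{P}}_{x_t}$ are push-forwards of $\mathbb{P}_{x_{t-1}}$, respectively $\hat{\mathbb{P}}_{x_{t-1}}$, through the same Markov kernel $\mathbb{T}_{(\cdot)}$. Writing $p_{x_t}(y) - \tilde p_{x_t}(y) = \int_{\mathbb{R}^d} p(y\mid x)\,(p_{x_{t-1}}(x) - \hat p_{x_{t-1}}(x))\,dx$, bringing the absolute value inside, integrating over $y$, and applying Fubini together with $\int_{\mathbb{R}^d} p(y\mid x)\,dy = 1$ for every $x$, gives $\text{TV}(\mathbb{P}_{x_t},\tilde{\mathbb{P}}_{x_t}) \le \tfrac12 \int_{\mathbb{R}^d} |p_{x_{t-1}}(x) - \hat p_{x_{t-1}}(x)|\,dx = \text{TV}(\mathbb{P}_{x_{t-1}},\hat{\mathbb{P}}_{x_{t-1}})$, i.e.\ the non-expansiveness of a Markov kernel in $\text{TV}$.

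For the second term, I would exploit that both $\tilde{\mathbb{P}}_{x_t}$ and $\hat{\mathbb{P}}_{x_t}$ decompose over the partition $\bm{\mathcal{X}}_{t-1}$. Since $\omega^{(k)}_t = \hat{\mathbb{P}}_{x_{t-1}}(\mathcal{X}^{(k)}_{t-1}) = \int_{\mathcal{X}^{(k)}_{t-1}} \hat p_{x_{t-1}}(x)\,dx$, the density of $\hat{\mathbb{P}}_{x_t}$ is $\hat p_{x_t}(y) = \sum_{k=1}^{K_{t-1}} \int_{\mathcal{X}^{(k)}_{t-1}} p(y\mid x^{(k)}_{t-1})\,\hat p_{x_{t-1}}(x)\,dx$, whereas splitting the defining integral of $\tilde p_{x_t}$ over the regions gives $\tilde p_{x_t}(y) = \sum_{k=1}^{K_{t-1}} \int_{\mathcal{X}^{(k)}_{t-1}} p(y\mid x)\,\hat p_{x_{t-1}}(x)\,dx$. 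Subtracting, taking absolute values inside both the sum and the $y$-integral, swapping the order of integration, and recognising $\tfrac12 \int_{\mathbb{R}^d} |p(y\mid x) - p(y\mid x^{(k)}_{t-1})|\,dy = s(x, x^{(k)}_{t-1})$, yields $\text{TV}(\tilde{\mathbb{P}}_{x_t},\hat{\mathbb{P}}_{x_t}) \le \sum_{k} \int_{\mathcal{X}^{(k)}_{t-1}} s(x, x^{(k)}_{t-1})\,\hat p_{x_{t-1}}(x)\,dx \le \sum_{k} \big(\max_{x\in\mathcal{X}^{(k)}_{t-1}} s(x, x^{(k)}_{t-1})\big)\,\hat{\mathbb{P}}_{x_{t-1}}(\mathcal{X}^{(k)}_{t-1})$. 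Adding the two bounds closes the induction.

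I expect the difficulty here to be bookkeeping rather than conceptual depth: the main things to be careful about are that every measure involved admits a density (guaranteed by the standing assumption), that $\mathbb{T}_{x^{(k)}_{t-1}}$ indeed has density $y\mapsto p(y\mid x^{(k)}_{t-1})$ so that the mixture $\hat{\mathbb{P}}_{x_t}$ has the stated density, and that each use of Fubini's theorem is legitimate — which it is, since all integrands are nonnegative after the absolute values are taken. One should also note that the bound goes through verbatim with a supremum in place of $\max$ over $\mathcal{X}^{(k)}_{t-1}$ if continuity of $x\mapsto s(x,x^{(k)}_{t-1})$ on a compact region is not invoked to guarantee attainment.
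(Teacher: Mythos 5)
Your proposal is correct and follows essentially the same route as the paper's proof: the same auxiliary distribution $\tilde{p}_{x_t}(y) = \int p(y\mid x)\,\hat{p}_{x_{t-1}}(x)\,dx$, the same triangle-inequality split, the same Fubini/normalization argument showing non-expansiveness of the kernel for the first term, and the same partition-wise decomposition and maximization for the second. No gaps to report.
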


Theorem \ref{thm:gaussian-case} guarantees that the approximation error at each time step propagates at most linearly over time.    Furthermore, because of term $\hat{\mathbb{P}}_{x_{t-1}}(\mathcal{X}^{(k)}_{t-1})$, only regions with non-negligible probability mass w.r.t. $\hat{\mathbb{P}}_{x_{t-1}}$ influence the TV bound.  The key component of Theorem 1 is the function $s\big( {x}, x^{(k)}_{t-1} \big)$, which is equivalent to $\text{TV}(\mathbb{T}_{x},\mathbb{T}_{x^{(k)}_{t-1}})$, i.e., the total variation between the one-step transition kernel at $x$ and $x^{(k)}_{t-1}$.
Consequently, the computation of $s\big( {x}, x^{(k)}_{t-1} \big)$ depends on the specific noise distribution in System \eqref{eq:original-problem}. 

Luckily, closed-form expressions for $s\big( {x}, x^{(k)}_{t-1} \big)$ or for upper bounds exist for many cases of interest, including when $\mathbb{T}_{x}$ follows a Gaussian, Gamma, or uniform distribution. 
Because of its widespread use, the Gaussian case is explicitly considered in detail in the next paragraph. For the Gamma distribution, one can use the fact that $s$ can be bounded by the KL divergence using Pinsker's inequality and then rely on closed forms for the KL divergence between Gamma distributions \cite{tsybakov2009nonparametric}. For the uniform distribution, computing $s$ can be performed by separating the integral in regions based on the overlapping of the supports (which can be easily done when the dimensions are independent, and hence the supports are hyper-rectangles). 

\paragraph{Total Variation Bounds for the Gaussian Case}

The following result is a corollary of Theorem \ref{thm:general-case} in the case of additive Gaussian noise.
\begin{corollary}\label{thm:gaussian-case}
Assume that $g(x_t,\bm{\varepsilon}_{t}) = f(x_t) + \bm{\varepsilon}_{t}$ and $\bm{\varepsilon}_{t} \sim \mathcal{N}(0, \Sigma_{\varepsilon})$. Then, for any $t>0$ it holds that 
\begin{equation} \label{eq:tvBoundGaussianCase}
\begin{split}
& \text{TV}  \left( \mathbb{P}_{x_t}, \hat{\mathbb{P}}_{x_t} \right) \leq \text{TV} \left( \mathbb{P}_{x_{t-1}}, \hat{\mathbb{P}}_{x_{t-1}} \right) + \\ 
&  \qquad \sum_{k=1}^{K_{t-1}} \text{erf} \left( \max_{{x}\in \mathcal{X}^{(k)}_{t-1}} h \big( {x}, x^{(k)}_{t-1} \big) \right)  \hat{\mathbb{P}}_{x_{t-1}}(\mathcal{X}^{(k)}_{t-1}),
\end{split}
\end{equation}
where $\text{TV} \left( \mathbb{P}_{x_0}, \hat{\mathbb{P}}_{x_0} \right)=0$ and
\small
$$
h \big( x, x^{(k)}_{t-1} \big) := \frac{1}{2 \sqrt{2}} \sqrt{(f \left( x \right) - f(x^{(k)}_{t-1}))^T \Sigma_{\varepsilon}^{-1} (f(x) - f(x^{(k)}_{t-1}))}.
$$
\normalsize
\end{corollary}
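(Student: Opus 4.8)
The plan is to instantiate Theorem~\ref{thm:general-case} for the additive Gaussian model and then evaluate the function $s$ in closed form. First I would observe that under $g(x_t,\bm{\varepsilon}_{t})=f(x_t)+\bm{\varepsilon}_{t}$ with $\bm{\varepsilon}_{t}\sim\mathcal{N}(0,\Sigma_{\varepsilon})$ the one-step transition kernel is exactly $\mathbb{T}_{x}=\mathcal{N}(f(x),\Sigma_{\varepsilon})$, as noted in Example~\ref{ex:gaussian-noise-explanation}. Hence, recalling from the discussion after Theorem~\ref{thm:general-case} that $s(x,x^{(k)}_{t-1})=\text{TV}(\mathbb{T}_{x},\mathbb{T}_{x^{(k)}_{t-1}})$, the problem reduces to computing the total variation between two multivariate Gaussians that share the same covariance matrix $\Sigma_{\varepsilon}$ (assumed positive definite) but have different means $f(x)$ and $f(x^{(k)}_{t-1})$.

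For that computation I would use invariance of the TV distance under invertible affine maps: applying $y\mapsto\Sigma_{\varepsilon}^{-1/2}y$ transforms the pair into $\mathcal{N}(\Sigma_{\varepsilon}^{-1/2}f(x),I)$ and $\mathcal{N}(\Sigma_{\varepsilon}^{-1/2}f(x^{(k)}_{t-1}),I)$, whose means differ by a vector of Euclidean norm $\sqrt{(f(x)-f(x^{(k)}_{t-1}))^{T}\Sigma_{\varepsilon}^{-1}(f(x)-f(x^{(k)}_{t-1}))}$, i.e. $2\sqrt{2}\,h(x,x^{(k)}_{t-1})$. A further orthogonal rotation aligns this difference with the first coordinate axis; the remaining $d-1$ coordinates then have identical marginals and factor out, so the TV equals the one-dimensional total variation between $\mathcal{N}(0,1)$ and $\mathcal{N}(m,1)$ with $m=2\sqrt{2}\,h(x,x^{(k)}_{t-1})$. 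A direct integration (the two one-dimensional densities cross at $m/2$) gives this quantity as $2\Phi(m/2)-1=\text{erf}(m/(2\sqrt{2}))=\text{erf}\big(h(x,x^{(k)}_{t-1})\big)$; this identity is standard and also follows from \cite{devroye2018total}. I expect this Gaussian TV evaluation to be the only genuinely substantive step, the rest being bookkeeping.

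Finally, I would substitute $s(x,x^{(k)}_{t-1})=\text{erf}\big(h(x,x^{(k)}_{t-1})\big)$ into the bound~\eqref{eq:tvBoundGeneralCase}. Since $\text{erf}$ is strictly increasing on $[0,\infty)$ and $h\geq 0$, the maximum over $x\in\mathcal{X}^{(k)}_{t-1}$ commutes with $\text{erf}$, i.e. $\max_{x}\text{erf}(h(x,x^{(k)}_{t-1}))=\text{erf}(\max_{x}h(x,x^{(k)}_{t-1}))$, which yields exactly~\eqref{eq:tvBoundGaussianCase}, and the base case $\text{TV}(\mathbb{P}_{x_0},\hat{\mathbb{P}}_{x_0})=0$ is inherited verbatim from Theorem~\ref{thm:general-case}. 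One minor point to handle with the same care as in Theorem~\ref{thm:general-case} is the attainment/measurability of $\max_{x\in\mathcal{X}^{(k)}_{t-1}}h(x,x^{(k)}_{t-1})$, which is ensured by continuity of $f$ together with compactness of the regions (or, failing compactness, by reading $\max$ as $\sup$ throughout).
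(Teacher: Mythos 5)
Your proposal is correct and follows essentially the same route as the paper: instantiate Theorem~\ref{thm:general-case} and identify $s\big(x,x^{(k)}_{t-1}\big)$ as the total variation between two Gaussians with equal covariance, which equals $\text{erf}\big(h(x,x^{(k)}_{t-1})\big)$. The only difference is that you derive this Gaussian TV identity from first principles (whitening, rotation, and one-dimensional integration), whereas the paper simply cites \cite{devroye2018total} and \cite{barsov1987estimates}; your explicit note that the monotonicity of $\text{erf}$ lets the maximum commute is a small point the paper leaves implicit.
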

Eqn \eqref{eq:tvBoundGaussianCase} follows from Theorem \ref{thm:general-case} using the fact that when the one-step transition kernel is Gaussian, then $s\big( x, x^{(k)}_{t-1} \big)$ can be computed in a closed form in terms of the function $h \big( x, x^{(k)}_{t-1} \big)$. Note also that for any $x\in \mathcal{X}_{t-1}^{(k)}$ the function $h \big( x, x^{(k)}_{t-1} \big)$, and consequently the tightness of the bounds, depends on the following factors: (i)  $f(x) - f(x^{(k)}_{t-1})$, which quantifies how much $f$ varies locally,  (ii) $\Sigma_{\varepsilon}^{-1}$, which suggests that lower noise variances may lead to worse bounds. This should not appear surprising when considering the definition of TV in Definition \ref{def:tv-distance}. In fact, from the definition of TV it follows that the TV between two delta Dirac distributions is one unless they have exactly the same mean. Consequently, the smaller the noise variance, the more local variation of $f(x)$ from the mean of each component of the GMM will affect the bound.

\begin{remark} \label{rmk:homoscedasticity-variance}
Corollary \ref{thm:gaussian-case} can be straightforwardly generalized to the case where the covariance matrix varies with time, as long as it is independent of $x$. If heteroscedasticity is required in the noise model, one possible strategy is to adapt the proof using the fact that KL divergence can be used to upper bound the TV distance and use Prop. 2.1 of \cite{devroye2018total} to bound term $s$.
\end{remark}

For both Theorem \ref{thm:general-case} and Corollary \ref{thm:gaussian-case}, at each time step the TV bound is obtained by a sum over all components in the mixture. An interesting question is whether the bound decreases with a mixture of a larger size. We answer positively to this question in the following subsection, where we prove the convergence of our method: the $\text{TV}  \left( \mathbb{P}_{x_t}, \hat{\mathbb{P}}_{x_t} \right)$ goes to zero for approximating mixture models of arbitrarily large size.

\subsection{Convergence Analysis}

In Theorem \ref{thm:convergence} 
below we show that the approximation error introduced by Eqn \eqref{eq:approximation-scheme} can be made arbitrarily small by increasing the size of the mixture. 
\begin{theorem}\label{thm:convergence} 
For $\bar{x}\in \mathbb{R}^d$ assume that $\mathbb{T}_{\bar{x}}$-almost surely (a.s.) for every $y \in \mathbb{R}^d$ it holds that
$$
\|x - \bar{x}\|_{\infty} \rightarrow 0 \implies p(y \mid x) \rightarrow p(y \mid \bar{x}),
$$ 
where $\|x - \bar{x}\|_{\infty}$ is the infinity norm of vector $x - \bar{x}.$
Then,  for any $\epsilon > 0$ and any time $t<\infty$, there exists a sequence of partitions $\{ \bm{\mathcal{X}}_{0}, ..., \bm{\mathcal{X}}_{t-1} \}$ such that 
\begin{equation}
\text{TV} \left( \mathbb{P}_{t}, \hat{\mathbb{P}}_{t} \right) \leq \epsilon.
\end{equation}

\end{theorem}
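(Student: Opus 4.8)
The plan is to unroll the one-step bound of Theorem~\ref{thm:general-case} and then argue that, at each of the finitely many steps, the freshly introduced error term can be pushed below $\epsilon/t$ by refining the partition used at that step. Iterating Eqn~\eqref{eq:tvBoundGeneralCase} together with $\text{TV}(\mathbb{P}_{x_0},\hat{\mathbb{P}}_{x_0})=0$ yields
\begin{equation*}
\text{TV}\!\left(\mathbb{P}_{x_t},\hat{\mathbb{P}}_{x_t}\right)\;\le\;\sum_{j=0}^{t-1} E_j,\qquad E_j:=\sum_{k=1}^{K_j}\Big(\max_{x\in\mathcal{X}^{(k)}_{j}} s\big(x,x^{(k)}_{j}\big)\Big)\,\hat{\mathbb{P}}_{x_{j}}\big(\mathcal{X}^{(k)}_{j}\big),
\end{equation*}
so it is enough to produce partitions with $E_j\le\epsilon/t$ for every $j$. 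Since the weights $\omega^{(k)}_{j+1}=\hat{\mathbb{P}}_{x_j}(\mathcal{X}^{(k)}_{j})$, and hence $\hat{\mathbb{P}}_{x_{j+1}}$, are determined by the partitions chosen at times $0,\dots,j$, I would choose the partitions sequentially: pick $\bm{\mathcal{X}}_0$ with $E_0\le\epsilon/t$ (this fixes $\hat{\mathbb{P}}_{x_1}$), then $\bm{\mathcal{X}}_1$ with $E_1\le\epsilon/t$, and so on; as $t<\infty$ this halts after $t$ rounds.

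The crux is a uniform-smallness statement for $s$. Recall $s(x,x')=\text{TV}(\mathbb{T}_x,\mathbb{T}_{x'})$, so $0\le s\le1$ and $s$ is symmetric and obeys the triangle inequality. First I would upgrade the hypothesis to $\lim_{x\to\bar x}s(x,\bar x)=0=s(\bar x,\bar x)$ for every $\bar x$: apply dominated convergence to $(p(\cdot\mid\bar x)-p(\cdot\mid x))^+\le p(\cdot\mid\bar x)$ — this positive part vanishes identically on $\{p(\cdot\mid\bar x)=0\}$, so the stated $\mathbb{T}_{\bar x}$-a.s. pointwise convergence is exactly what is needed for the integrand to tend to $0$ Lebesgue-a.e., which is the Scheff\'e argument. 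Combining this with $|s(x,x')-s(y,y')|\le s(x,y)+s(x',y')$ shows $s$ is jointly continuous on $\mathbb{R}^d\times\mathbb{R}^d$, hence uniformly continuous on any compact cube $\overline{B_R}:=[-R,R]^d$; since $s$ vanishes on the diagonal, there is $\eta_R>0$ with $s(x,x')\le\epsilon/(2t)$ whenever $x,x'\in\overline{B_R}$ and $\|x-x'\|_\infty\le\eta_R$.

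Given this, I would construct $\bm{\mathcal{X}}_j$ as follows (for the already-fixed measure $\hat{\mathbb{P}}_{x_j}$, which is a Borel probability measure since $\sum_k\omega^{(k)}_{j}=\hat{\mathbb{P}}_{x_{j-1}}(\mathbb{R}^d)=1$ by induction). By tightness pick $R$ with $\hat{\mathbb{P}}_{x_j}(B_R^c)\le\epsilon/(2t)$; partition $B_R$ into finitely many half-open cubes of $\infty$-diameter $\le\eta_R$, adjoin the single region $B_R^c$, and choose any representative point in each region. For a cube inside $B_R$ both $x$ and the representative lie in $\overline{B_R}$ at $\infty$-distance $\le\eta_R$, so $\max_{x}s(x,x^{(k)}_{j})\le\epsilon/(2t)$; on $B_R^c$ we only use $s\le1$. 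Hence
\begin{equation*}
E_j\;\le\;\frac{\epsilon}{2t}\,\hat{\mathbb{P}}_{x_j}(B_R)+1\cdot\hat{\mathbb{P}}_{x_j}(B_R^c)\;\le\;\frac{\epsilon}{2t}+\frac{\epsilon}{2t}=\frac{\epsilon}{t}.
\end{equation*}
Summing over $j=0,\dots,t-1$ gives $\text{TV}(\mathbb{P}_{x_t},\hat{\mathbb{P}}_{x_t})\le\epsilon$. I expect the main obstacle to be the middle step — converting the raw, almost-everywhere pointwise density-convergence hypothesis into the uniform-over-a-compact-set control of $\max_{x\in\mathcal{X}^{(k)}}s(x,x^{(k)})$ that the summation bound requires — where Scheff\'e's lemma and the metric structure of $s$ carry the load; a secondary, bookkeeping-level issue is to honor the sequential dependence of $\hat{\mathbb{P}}_{x_j}$ on earlier partitions so that each $E_j$ is measured against a fixed distribution.
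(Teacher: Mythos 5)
Your proposal is correct and follows essentially the same route as the paper's proof: unroll the recursion of Theorem~\ref{thm:general-case}, budget $\epsilon/t$ per step, use tightness to confine all but $\epsilon/(2t)$ of the mass to a compact cube, partition that cube into small subcubes, and invoke Scheff\'e's lemma to make $\max_x s(x,x^{(k)})$ small on each subcube. Your treatment is in fact slightly more careful than the paper's at the one delicate point — you pass from the pointwise limit $s(x,\bar x)\to 0$ to a \emph{uniform} modulus on the compact cube via the triangle inequality for $s$ and uniform continuity, whereas the paper asserts the uniform smallness of the double maximum directly from the pointwise Scheff\'e limit.
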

Note that the assumption of almost surely continuity of $p(y \mid x)$ in Theorem \ref{thm:convergence} is not particularly restrictive. For instance, it holds for distributions with continuous density, e.g., Gaussian, or even with bounded support, where the boundary of the support has measure zero. 

\begin{remark}
We should remark that the convergence in Theorem \ref{thm:convergence} is proved assuming that each $\bm{\mathcal{X}}_{i}$ is a uniform partition of a compact set containing $1-\frac{\epsilon}{2 t}$ probability mass of $\hat{\mathbb{P}}_{x_{i}}.$ While sufficient for showing convergence, this choice of partition is obviously sub-optimal, as in general, we would like to place more distributions from the mixture in regions where there is more probability mass. This is explored in our algorithmic framework detailed in the next section.
\end{remark}

\section{Algorithm} \label{section:algorithm}

We summarize our overall procedure to solve Problem \ref{prob:MainProb} in Algorithm \ref{alg:main-algo}. For every time $t,$ in Lines 2-5 we compute the approximating mixture at the current time step following Eqn \eqref{eq:approximation-scheme}. 
Then, in Line 6 we identify a hyper-rectangle $\mathcal{S}_t$ containing at least $1-\epsilon$ probability mass of  $\hat{\mathbb{P}}_{x_{t}}$, where $\epsilon$ is a given parameter, which in our implementation, for time step $t$ is taken as $\epsilon \ll \frac{\delta}{T}$, such that the impact of not partitioning region $\mathbb{R}^d\setminus \mathcal{S}_t$ on the bound is small compared to our desired precision. Such a region can either be identified analytically as $\hat{\mathbb{P}}_{x_{t}}$ is known, e.g., in the case $\hat{\mathbb{P}}_{x_{t}}$ is a GMM, or using samples. 
$\mathcal{S}_t$ is then partitioned in an initial set of $K_t$ hyper-rectangles in Line 7 such that all regions contain (approximately) probability $p_{\text{thr}}$ (a parameter usually set to $1\%$ in our experiments) by iteratively cutting the axis in half and checking the inside probability until the threshold is met. This allows us to segment initial regions by probability mass (see left grid in Figure \ref{fig:grid-example}), and not by size, as is the case with uniform/equidistant grids.
We encode the resulting set of representative points corresponding to partition $\bm{\mathcal{X}}_t := \{ \mathcal{X}^{(1)}_{t}, ..., \mathcal{X}^{(K_t)}_{t} \}$ by $\bm{x}^{\text{repres}}_{t} := \{ x^{(1)}_{t}, ..., x^{(K_t)}_{t} \}$ and for each  $\mathcal{X}^{(k)}_{t}$ we select  $x^{(k)}_{t}$ as its center point.
The resulting grid is then refined in Lines 10-12 until the TV bound is smaller than a given threshold, $t\frac{\delta}{T}$, with a refinement procedure, which is detailed in subsection \ref{sec:refinement}. Note that, because at each time step the TV bound is propagated at the next time step (see Theorem \ref{thm:general-case}), this guarantees that the for each $t\in \{0,...,T\}$ we have that $\text{TV}(\mathbb{P}_{x_{t}}, \hat{\mathbb{P}}_{x_{t}})$  will be smaller than $\delta$. We should also stress that, in practice, the computational complexity of Algorithm \ref{alg:main-algo} is dominated by Step 8, which for each region requires to compute the probability mass of each component of the mixture, thus requiring a total number of integral evaluations quadratic in the size of the mixture (as in general the number of components and regions are close across consecutive iterations).



\RestyleAlgo{ruled}
\SetKwComment{Comment}{/* }{ */}

\begin{algorithm}[h]
\caption{Mixture approximation with TV bounds for $T < \infty$ time steps given an error threshold $\delta$}\label{alg:main-algo}
\KwIn{$\epsilon$, $p_{\text{thr}}$, $\delta$}
\KwOut{Mixture approximations $\{ \hat{\mathbb{P}}_{x_{t}} \}_{t=1}^{T}$, and TV bounds $\{ \text{TV}_{t} \}_{t=1}^{T}$}

\For{$t \in \{ 0, ..., T - 1 \}$}{
    \uIf{$t = 0$}{
        $\hat{\mathbb{P}}_{x_0} \gets \mathbb{P}_{x_0}$, $\text{TV}_0 \gets 0$
    }
    \Else{
        $\hat{\mathbb{P}}_{x_t} \gets \sum_{k = 1}^{K_{t-1}} \omega^{(k)}_{t} {\mathbb{T}}_{x^{(k)}_{t-1}}$ 
    }
    
    $\mathcal{S}_t = \texttt{IdentifyHighProbRegion}(\hat{\mathbb{P}}_{x_t}, \epsilon)$ 
    
    $\bm{\mathcal{X}}_t, \bm{x}^{\text{repres}}_{t} \gets \texttt{BuildGrid}( \mathcal{S}_t, p_{\text{thr}})$ 
    
    $\bm{\omega}_t \gets \{ \hat{\mathbb{P}}_{x_t}(\mathcal{X}^{(1)}_{t}), ..., \hat{\mathbb{P}}_{x_t}(\mathcal{X}^{(K_t)}_{t}) \}$ 
    
    $\text{TV}_{t+1} \gets \text{TV}(\mathbb{P}_{x_{t+1}}, \hat{\mathbb{P}}_{x_{t+1}}) \text{ using Thm \ref{thm:general-case}}$
    
    \While{$\text{TV}_{t+1} \geq t \frac{\delta}{T}$}{
    $\bm{\mathcal{X}}_t, \bm{x}^{\text{repres}}_{t}, \bm{\omega}_t \gets \texttt{Refine}(\bm{\mathcal{X}}_t, \bm{x}^{\text{repres}}_{t}, \bm{\omega}_t)$
    
    $\text{TV}_{t+1} \gets \text{TV}(\mathbb{P}_{x_{t+1}}, \hat{\mathbb{P}}_{x_{t+1}}) \text{ using Thm \ref{thm:general-case}}$}
}
\textbf{return } $\{ \hat{\mathbb{P}}_{x_{1}}, ..., \hat{\mathbb{P}}_{x_{T}} \}$ , $\{ \text{TV}_1, ..., \text{TV}_T \}$
\end{algorithm}

\subsection{Refinement Procedure}
\label{sec:refinement}
The refinement procedure is described in Algorithm \ref{algo:refine-grid}. The underlying idea of the refinement procedure is to rely on Theorem \ref{thm:general-case}, where we can observe that for any $t$, $\text{TV}(\mathbb{P}_{x_{t}}, \hat{\mathbb{P}}_{x_{t}})$  depends on a sum over regions in $\bm{\mathcal{X}}_{t-1}$, where each ${\mathcal{X}^{(k)}}_{t-1} \in \bm{\mathcal{X}}_{t-1}$ contributes to the TV bound with the quantity $\max_{{x}\in \mathcal{X}^{(k)}_{t-1}} s\big( {x}, x^{(k)}_{t-1} \big)  \hat{\mathbb{P}}_{x_{t-1}}(\mathcal{X}^{(k)}_{t-1}) := c^{(k)}$. The refinement procedure boils down to partitioning regions whose contribution to the TV bound is high in relative terms. More specifically, if the contribution $c^{(k)}$ of the hypercube $\mathcal{X}^{(k)}$ is higher than a threshold $\gamma > 0$, we cut the region in $2^d$ subregions by dividing each axis by $2$ in its midpoint. An example of a refinement step is illustrated in Figure \ref{fig:grid-example}, where we can see that the refinement step decided to cut all inner regions except for the corners. 

\begin{algorithm}[h]
\SetAlgoLined
\DontPrintSemicolon

\SetKwFunction{FMain}{Refine}
\SetKwProg{Fn}{function}{:}{}
\Fn{\FMain{$\bm{\mathcal{X}}, \bm{x}^{\text{repres}}, \bm{\omega}$}}{
\For{$\mathcal{X}^{(k)}, x^{(k)}$ in $\bm{\mathcal{X}}, \bm{x}^{\text{repres}}$}{
    $c^{(k)} \gets \max_{x \in \mathcal{X}^{(k)}} s \big( x, x^{(k)} \big) \hat{\mathbb{P}}_{x_{t-1}} \big( \mathcal{X}^{(k)} \big)$

    \uIf{$c^{(k)} > \gamma$}{
        Cut $\mathcal{X}^{(k)}$ in $2^d$ hypercubes
    }
}
    $\bm{\omega} \gets \{ \hat{\mathbb{P}}_{x_t}(\mathcal{X}^{(1)}_{\text{new}}), ..., \hat{\mathbb{P}}_{x_t}(\mathcal{X}^{(K^{*})}_{\text{new}}) \}$ 
    
    \textbf{return} $\mathcal{X}, \bm{x}^{\text{repres}}, \bm{\omega}$ 
}
\caption{Refining the grid}
\label{algo:refine-grid}
\end{algorithm}

\begin{figure}
\centering
\includegraphics[width=0.5\textwidth]{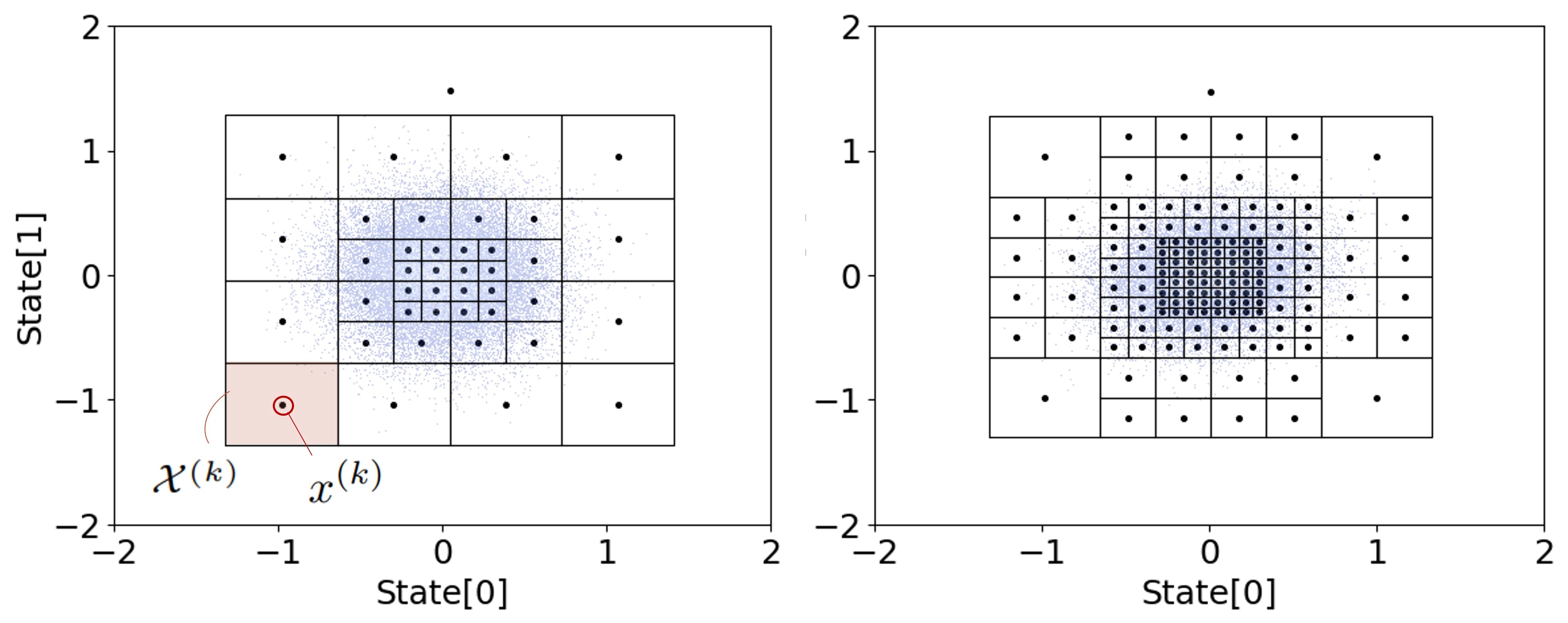}
\caption{Example of a partition used to propagate the distribution of the System presented in Example \ref{ex:gaussian-noise-explanation} with $\varepsilon \sim \mathcal{N}(0, 0.1I)$. The initial grid (left) is refined (right) with a contribution threshold of $\gamma = 1 \times 10^{-3}$. We see that the algorithm decided to cut all regions except for the corners.}
\label{fig:grid-example}
\end{figure}

\section{NUMERICAL EXPERIMENTAL RESULTS}\label{section:exp-results}
For our experiments\footnote{The code can be found at \url{https://github.com/EduardoFMDCosta/DUQviaTotalVariation}}, we consider the following benchmarks: 
\begin{itemize}
\item A 2-D linear system with bimodal initial distribution with dynamics given by $g(x_t,\varepsilon_t)=Ax_t +\varepsilon_t$, where $ A = \begin{bmatrix}
    0.84 & 0.10 \\
    0.05 & 0.72
\end{bmatrix}$ and $\varepsilon_t \sim \mathcal{N}([0, 0]^T, 0.03I)$. The initial distribution is a mixture of two Gaussian distributions, that is, $x_0 \sim 0.5 \times \mathcal{N}([6, 10]^T, 0.005I) + 0.5 \times \mathcal{N}([8, 10]^T, 0.005I)$.
\item A 2-D linear system with dynamics equal to the previous one, but with noise and initial condition following a uniform distribution. Specifically, we set $\varepsilon_t \sim \mathcal{U}([-0.3, 0.3]) \times \mathcal{U}([-0.3, 0.3])$, that is, a uniform distribution for each component of the noise between $[-0.3,0.3]$. For the initial distribution we assume $x_0 \sim \mathcal{U}([-0.1, 0.1]) \times \mathcal{U}([-0.1, 0.1])$.
\item An unstable polynomial system $g(x_t,\varepsilon_t)=f(x_t) + \varepsilon_t$ where $f(x) = [f_1(x), f_2(x)]^T$ with
\begin{align*}
f_1(x) &= x_1 + 1.25 h x_2 \\
f_2(x) &= 1.4 x_2 + 0.3h (0.25x_1^2 - 0.4 x_1 x_2 + 0.25 x_2^2)
\end{align*}
for $h = 0.05$, $\varepsilon_t \sim \mathcal{N}([0, 0]^T, \sigma^2I)$, where $\sigma\in \{1,0.1, 0.001,0.0001\}$ and $x_0 \sim \mathcal{N}([1, 1]^T, 0.002I)$.
\item A 3D Dubins car model with constant velocity from \cite{mathiesen2022safety}, with $v = 5$, $h = 0.3$, and $u = 1/0.5$, with $x_0 \sim \mathcal{N}([0, 0, 0]^T, [0.005, 0.005, 0.001]])$, and $\varepsilon_t \sim \mathcal{N}([0, 0, 0]^T, [0.06, 0.06, 0.01])$.
\end{itemize}

In Algorithm \ref{alg:main-algo}, we set $p_\text{thr} = 0.01$ ($0.001$ for Dubins) to build the initial grid, and for the refinement in Algorithm \ref{algo:refine-grid} we choose $\gamma$ ranging from $10^{-6}$ to $10^{-7}$ depending on the case. All experiments were run on an Intel Core i7-1365U CPU with 16GB of RAM. Step 8
of Algorithm \ref{alg:main-algo} was parallelized on 10 cores (given that this step is the bottleneck in terms of computational complexity).

\subsection{Tightness of the Bounds}

We start our analysis from Table \ref{table:effect-variance}, where we study the efficacy of Algorithm \ref{alg:main-algo} to build a GMM approximation for the polynomial system benchmark. The table shows that unless the variance of the noise, and consequently of the one-step transition kernel, becomes very small (below $10^{-2}$) even a GMM of relatively few mixtures (order of $100$) can obtain relatively tight TV bounds
and be employed to solve Problem \ref{prob:MainProb}.
However, when the variance of the noise becomes small and the noise distribution becomes closer to a delta Dirac, as already discussed right below Corollary \ref{thm:gaussian-case}, obtaining tight TV bounds become harder. 
Nevertheless, as illustrated in Figure \ref{fig:dubin-example-variance}, 
we should stress that, 
even if obtaining tight TV bounds in this setting becomes more challenging, the distribution obtained with our framework is still empirically very close to the true one even with mixtures of relatively small size. 

\begin{figure}
     \centering
     \begin{subfigure}[b]{0.20\textwidth}
         \centering        \includegraphics[width=\textwidth]{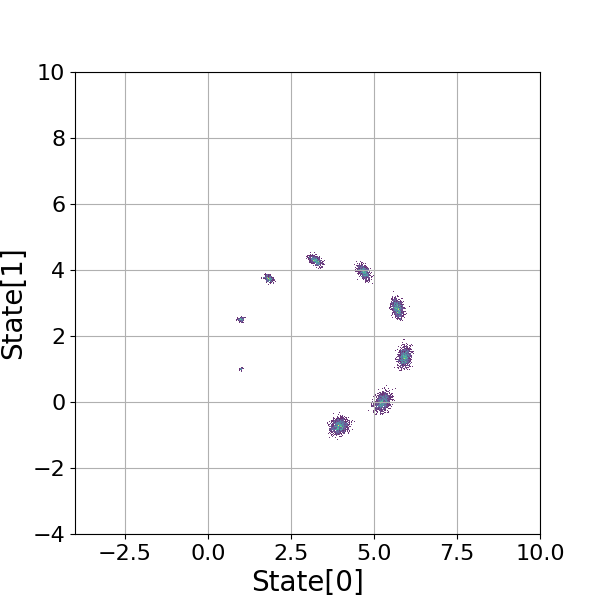}
         \caption{True distribution}
         \label{fig:dubin-low-variance-true}
     \end{subfigure}
     \hskip -2ex
     \begin{subfigure}[b]{0.20\textwidth}
         \centering       \includegraphics[width=\textwidth]{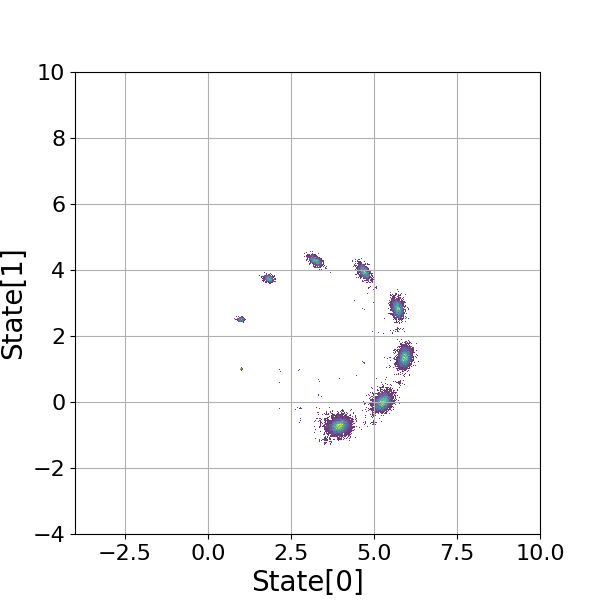}
         \caption{Mixture approximation}
         \label{fig:dubin-low-gmm-approx}
     \end{subfigure}
     \hfill
     \centering
     \begin{subfigure}[b]{0.20\textwidth}
         \centering        \includegraphics[width=\textwidth]{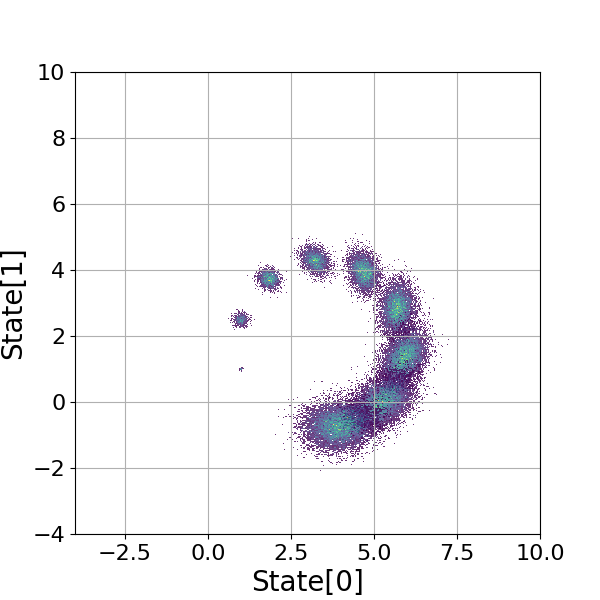}
         \caption{True distribution}
         \label{fig:dubin-high-variance-true}
     \end{subfigure}
     \hskip -2ex
     \begin{subfigure}[b]{0.20\textwidth}
         \centering       \includegraphics[width=\textwidth]{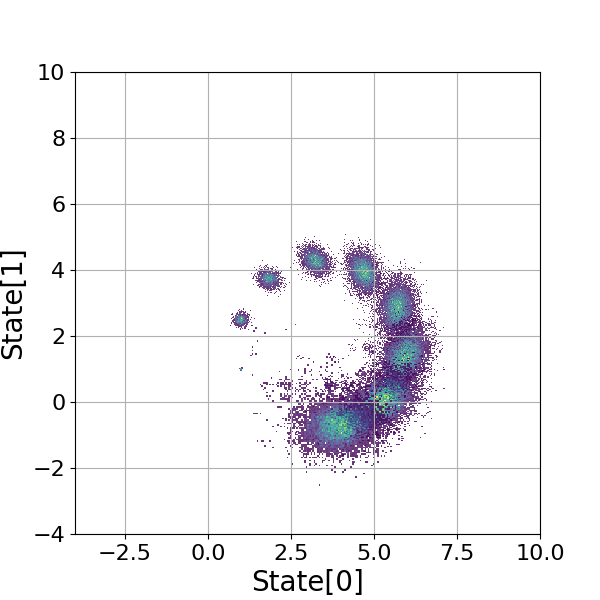}
         \caption{Mixture approximation}
         \label{fig:dubin-high-gmm-approx}
     \end{subfigure}
     \hfill
\caption{Upper row considers the Dubins car setting with low noise variance (i.e. $10^{-3}$ for the positions $x_1$ and $x_2$, and $10^{-4}$ for the steering angle $x_3$), while the lower row considers a higher variance noise ($10^{-2}$, and $2 \times 10^{-3}$, respectively). The mixtures are generated with only one refinement (see Algorithm \ref{algo:refine-grid}) with $\gamma = 10^{-6}$, resulting in mixtures of size 2500-3000. \textit{True distribution} means Monte Carlo sampling from the system, while the plots on the right are generated by sampling our mixture distributions.}
\label{fig:dubin-example-variance}
\end{figure}

In Table \ref{table:summary-bounds} we report an empirical evaluation of our approach for all four benchmarks. For all cases, we consider $5$ refinement steps and compare the mixture model obtained from Algorithm \ref{alg:main-algo} with the standard approach in \cite{landgraf2023probabilistic} (called \textit{equidistant grid}, which we will sometimes refer to the uniform grid)\footnote{Other works have proposed adaptive (i.e. non-uniform) grids for similar problems, e.g. \cite{vsimandl2006advanced}, but ours is designed with guarantees of improvement in the TV distance.}, where each distribution in the mixture is uniformly spaced in a region containing most of the probability mass of the system and obtained via samples. We consider a time horizon of $T=10$ for the bimodal case, $T=7$ for the polynomial, and $T=5$ for Dubins car and uniform settings. In all cases we can see how Algorithm \ref{algo:refine-grid} substantially outperforms the approach mentioned in \cite{landgraf2023probabilistic} by obtaining much better bounds for an equivalent number of regions. In the Dubins car, for instance, we were not able to obtain informative bounds (e.g. $\ll 1$) for a reasonable number of regions, while Algorithm \ref{algo:refine-grid} allowed for bounds between $3\%$ (first-time step) to $20\%$ (last).

\begin{table}[h]
\resizebox{\columnwidth}{!}{%
\begin{tabular}{@{}lcccccccc@{}}
\toprule
Noise variance $\sigma^2$ & \multicolumn{2}{c}{\textbf{$1$}}            & \multicolumn{2}{c}{\textbf{$10^{-1}$}}          & \multicolumn{2}{c}{\textbf{$10^{-2}$}}          & \multicolumn{2}{c}{\textbf{$10^{-3}$}} \\ \midrule
\textbf{\# refinements} & \textbf{TV} & \textbf{GMM size}           & \textbf{TV} & \textbf{GMM size}           & \textbf{TV} & \textbf{GMM size}            & \textbf{TV}   & \textbf{GMM size}  \\ \midrule
Initial grid            & 0.020       & \multicolumn{1}{c|}{121}    & 0.061       & \multicolumn{1}{c|}{124}    & 0.205       & \multicolumn{1}{c|}{115}     & 0.471         & 127                \\
1                       & 0.011       & \multicolumn{1}{c|}{485}    & 0.031       & \multicolumn{1}{c|}{497}    & 0.107       & \multicolumn{1}{c|}{461}     & 0.279         & 509                \\
2                       & 0.006       & \multicolumn{1}{c|}{1,937}  & 0.016       & \multicolumn{1}{c|}{1,982}  & 0.054       & \multicolumn{1}{c|}{1,841}   & 0.151         & 2,033              \\
3                       & 0.003       & \multicolumn{1}{c|}{7,667}  & 0.008       & \multicolumn{1}{c|}{7,868}  & 0.027       & \multicolumn{1}{c|}{7,331}   & 0.078         & 8,108              \\
4                       & 0.002       & \multicolumn{1}{c|}{26,288} & 0.005       & \multicolumn{1}{c|}{30,824} & 0.014       & \multicolumn{1}{c|}{28,997}  & 0.039         & 32,213             \\
5                       & 0.002       & \multicolumn{1}{c|}{31,070} & 0.003       & \multicolumn{1}{c|}{59,660} & 0.007       & \multicolumn{1}{c|}{112,994} & 0.020         & 127,220            \\ \bottomrule
\end{tabular}%
}
\caption{TV bounds for a 1-time step propagation of the polynomial system for the same initial distribution $x_0$ for various values of variance $\sigma^2$. In each row, we show an iteration of the refinement algorithm and report both the resulting TV bound and the size of the approximating GMM. }
\label{table:effect-variance}
\end{table}

\begin{table}[h]
\resizebox{\columnwidth}{!}{%
\begin{tabular}{lll|ll|ll|ll}
\textbf{}                  & \multicolumn{2}{c|}{\textbf{Bimodal}}                                      & \multicolumn{2}{c|}{\textbf{Polynomial}}                                   & \multicolumn{2}{c|}{\textbf{Uniform}}                                      & \multicolumn{2}{c}{\textbf{Dubins}}                                       \\ \hline
\textbf{Grid method} & \multicolumn{1}{c}{\textbf{Alg 2}} & \multicolumn{1}{c|}{\textbf{Equid}} & \multicolumn{1}{c}{\textbf{Alg 2}} & \multicolumn{1}{c|}{\textbf{Equid}} & \multicolumn{1}{c}{\textbf{Alg 2}} & \multicolumn{1}{c|}{\textbf{Equid}} & \multicolumn{1}{c}{\textbf{Alg 2}} & \multicolumn{1}{c}{\textbf{Equid}} \\ \hline
$\text{TV}_1$                     & 0.004                               & 0.007                                & 0.004                               & 0.006                                & 0.004                               & 0.003                                & 0.028                               & 0.99                                \\
$\text{TV}_T$                     & 0.061                               & 0.092                                & 0.099                               & 0.179                                & 0.041                               & 0.048                                & 0.198                               & 1.00                                \\ \hline
Avg TV                     & 0.033                               & 0.049                                & 0.039                               & 0.070                                & 0.022                               & 0.022                                & 0.101                               & 1.00                                \\ \hline
\end{tabular}%
}
\caption{TV bounds for the first and last propagation steps (as well as for the average) for each system. The bounds are computed from grids generated by Algorithm \ref{algo:refine-grid} (Alg \ref{algo:refine-grid}) with 5 refinements and by the uniform/equidistant grids (Equid) of equivalent size.}
\label{table:summary-bounds}
\end{table}

\subsection{Application to chance constrained certification}

To highlight the usefulness of our approach, we show how it can be employed to quantify the probability that, at each time step, System \eqref{eq:original-problem} satisfies probabilistic constraints, a setting commonly occurring in planning problems for stochastic systems \cite{adams2022formal,brudigam2021stochastic,ho2022gaussian}. In particular, for the bimodal system, we consider the setting in Figure \ref{fig:bimodal-chance-constraint}, where the goal is to quantify the probability that at each time step the system will enter the unsafe region in red (which we denote by $p_{\text{hit}}$. In Table \ref{table:chance-constraint}, we report both the empirical value obtained by simulating the specific system $10^4$ times and the upper bound of the hitting probability at each time step according to our framework (computed as the probability that the GMM approximation hits the obstacle + the TV bound at the particular time step computed according to Theorem \ref{thm:general-case}). We observe that the upper bound for $p_{\text{hit}}$ is fairly close to the actual hitting probability of the system, due to two factors: i) the hitting probability estimated by our mixture is very close to the true value (analogous to what is shown e.g. in Figure \ref{fig:dubin-example-variance}), and ii) the TV bounds are reasonably close to zero (see also Table \ref{table:summary-bounds}).

\begin{table}[h]
\resizebox{\columnwidth}{!}{%
\begin{tabular}{lcc}
& \multicolumn{2}{c}{\textbf{Bimodal}} \\ \hline
\textbf{Time step} & \textbf{Actual hitting probabity $p_{\text{hit}}$}  & \textbf{Upper bound $p_{\text{hit}}$}    \\ \hline
1                  & 0.0               & 0.4              \\
2                  & 0.0               & 0.9              \\
3                  & 0.0               & 1.6              \\
4                  & 0.0               & 2.4              \\
5                  & 18.4              & 21.5             \\
6                  & 38.0              & 42.1             \\
7                  & 20.9              & 25.6             \\
8                  & 3.5               & 8.5              \\
9                  & 0.0               & 5.8              \\
10                 & 0.0               & 6.2              \\ \hline
\end{tabular}%
}
\caption{Actual $p_{\text{hit}}$ is the unsafe set hitting probability computed via Monte Carlo simulation of the actual system ($10^4$ samples), while the upper bound is given by the probability mass of the GMM inside the unsafe set + TV bound.}
\label{table:chance-constraint}
\end{table}

\begin{figure}
    \centering
     \begin{subfigure}[b]{0.24\textwidth}
         \centering        \includegraphics[width=\textwidth]{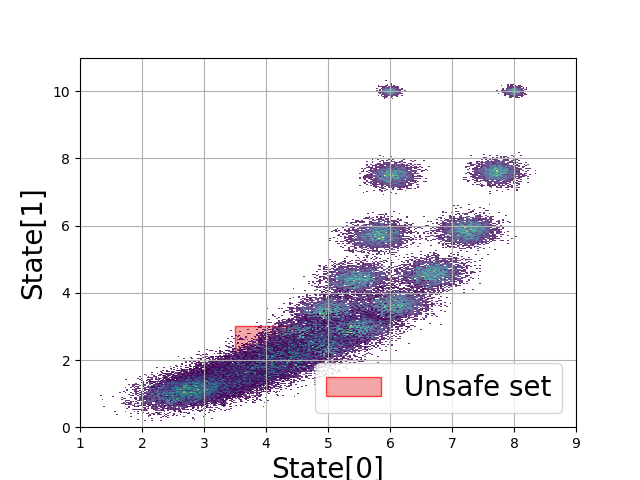}
         \caption{True distribution}
         \label{fig:bimodal-true}
     \end{subfigure}
     \hskip -2ex
     \begin{subfigure}[b]{0.24\textwidth}
         \centering       \includegraphics[width=\textwidth]{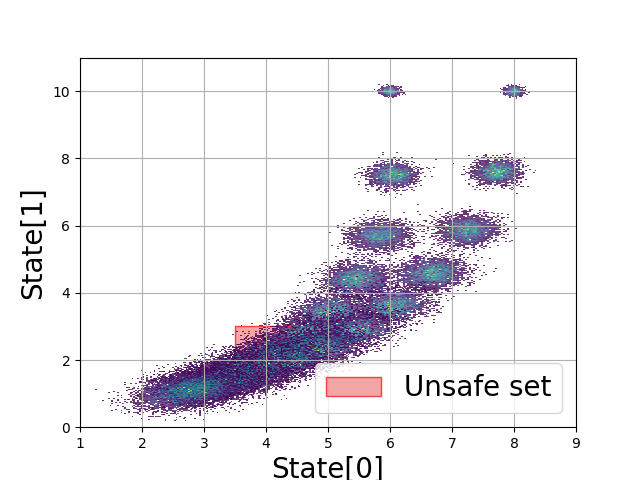}
         \caption{Mixture approximation}
         \label{fig:bimodal-gmm-approx}
     \end{subfigure}
     \hfill
\caption{Bimodal setting for which we present the hitting probability study. The left image shows a Monte Carlo simulation of the system, while the right displays samples from our mixture approximation. $p_{\text{hit}}$ is computed w.r.t. the red set.}
\label{fig:bimodal-chance-constraint}
\end{figure}

\section{PROOFS} \label{section:proofs}

\subsection{Proof of Theorem \ref{thm:general-case}}

For the proof we introduce a new auxiliary distribution $\Tilde{\mathbb{P}}_{x_t}$, whose density is defined as
$$\Tilde{p}_{x_t} (y) := \int_{\mathbb{R}^d} p(y \mid x) \hat{p}_{x_{t-1}}(x) dx.$$
That is, $\Tilde{\mathbb{P}}_{x_t}$ is the probability distributions given by the propagation of mixture $\hat{P}_{x_{t-1}}$ through the dynamics of the system. Then, the proof uses the fact that by the triangle inequality
\begin{align*}\text{TV} &\left( \mathbb{P}_{x_t}, \hat{\mathbb{P}}_{x_t} \right) \leq \text{TV} \left( \mathbb{P}_{x_t}, \Tilde{\mathbb{P}}_{x_t} \right) + \text{TV} \left( \Tilde{\mathbb{P}}_{x_t}, \hat{\mathbb{P}}_{x_t} \right).
\end{align*}
Consequently, to bound $\text{TV} \left( \mathbb{P}_{x_t}, \hat{\mathbb{P}}_{x_t} \right) $ it is enough to bound $\text{TV} \left( \mathbb{P}_{x_t}, \Tilde{\mathbb{P}}_{x_t} \right) $ and $\text{TV} \left( \Tilde{\mathbb{P}}_{x_t}, \hat{\mathbb{P}}_{x_t} \right).$
We start from $\text{TV} \left( \Tilde{\mathbb{P}}_{x_t}, \hat{\mathbb{P}}_{x_t} \right)$. To do that we first note that for every $y \in \mathbb{R}^d$, the following holds:
\begin{align*}
|\Tilde{p}_{x_t}(y) &- \hat{p}_{x_t}(y)|\\ & = \Bigl| \int_{\mathbb{R}^d} p(y \mid x) \hat{p}_{x_{t-1}}(x) dx - \sum_{k=1}^{K_{t-1}} \omega^{(k)}_{t} p(y \mid x^{(k)}_{t-1})  \Bigr|  \\
& = \Bigl| \sum_{k=1}^{K_{t-1}} \int_{\mathcal{X}^{(k)}_{t-1}} \big( p(y \mid x) - p(y \mid x^{(k)}_{t-1}) \big) \hat{p}_{x_{t-1}}(x) dx \Bigr| \\
&\leq \sum_{k=1}^{K_{t-1}} \int_{\mathcal{X}^{(k)}_{t-1}} \big| p(y \mid x) - p(y \mid x^{(k)}_{t-1}) \big| \hat{p}_{x_{t-1}}(x) dx,
\end{align*}
where the second line uses the definition of $\omega^{(k)}_{t}$ in Eqn \eqref{eq:approximation-scheme}, and the third the triangle inequality.
From this, it follows that 
\begin{equation*}
\begin{split}
& \text{TV} \left( \Tilde{\mathbb{P}}_{x_t}, \hat{\mathbb{P}}_{x_t} \right) = \frac{1}{2} \int_{\mathbb{R}^d} \big| \Tilde{p}_{x_t}(y) - \hat{p}_{x_t}(y) \big| dy \\
&\leq \frac{1}{2} \int_{\mathbb{R}^d} \left[ \sum_{k=1}^{K_{t-1}} \int_{\mathcal{X}^{(k)}_{t-1}} \big| p(y \mid x) - p(y \mid x^{(k)}_{t-1}) \big| \hat{p}_{x_{t-1}}(x) dx \right] dy \\
&= \sum_{k=1}^{K_{t-1}} \int_{\mathcal{X}^{(k)}_{t-1}} \underbrace{\left[ \frac{1}{2} \int_{\mathbb{R}^d} |p(y \mid x) - p(y \mid x^{(k)}_{t-1})| dy \right]}_{s \big( x, x^{(k)}_{t-1} \big)} \hat{p}_{x_{t-1}}(x) dx \\
&\leq \sum_{k=1}^{K_{t-1}} \max_{x \in \mathcal{X}^{(k)}_{t-1}} s \big( x, x^{(k)}_{t-1} \big) \int_{\mathcal{X}^{(k)}_{t-1}} \hat{p}_{x_{t-1}}(x) dx \\
&= \sum_{k=1}^{K_{t-1}} \max_{x \in \mathcal{X}^{(k)}_{t-1}} s \big( x, x^{(k)}_{t-1} \big) \hat{\mathbb{P}}_{x_{t-1}}\big( \mathcal{X}^{(k)}_{t-1} \big)
\label{eq:bound-tv-before-gaussian-assumption}
\end{split}
\end{equation*}
where the change order of integration in the third line is possible due to the application of Fubini's theorem after the observation that the inner function is integrable on $\mathbb{R}^d \times \mathbb{R}^d$.

Now, to conclude, for the TV distance between $\mathbb{P}_{x_t}$ and $\Tilde{\mathbb{P}}_{x_t}$ we have that
\begin{equation*}
\begin{split}
\text{TV} & \left( \mathbb{P}_{x_t}, \Tilde{\mathbb{P}}_{x_t} \right) \\
& = \frac{1}{2} \int_{\mathbb{R}^d} \big| p_{x_t}(y) - \Tilde{p}_{x_t}(y) \big| dy \\
&= \frac{1}{2} \int_{\mathbb{R}^d} \big| \int_{\mathbb{R}^d} p(y \mid x) p_{x_{t-1}}(x) - \int_{\mathbb{R}^d} p(y \mid x) \hat{p}_{x_{t-1}}(x) dx \big| dy \\
&= \frac{1}{2} \int_{\mathbb{R}^d} \big| \int_{\mathbb{R}^d} p(y \mid x)(p_{x_{t-1}}(x) - \hat{p}_{x_{t-1}}(x))dx \big| dy \\
&\leq \frac{1}{2} \int_{\mathbb{R}^d} \int_{\mathbb{R}^d} p(y \mid x) \big| p_{x_{t-1}}(x) - \hat{p}_{x_{t-1}}(x) \big| dx dy \\
&= \frac{1}{2} \int_{\mathbb{R}^d} \underbrace{\left[ \int_{\mathbb{R}^d} p(y \mid x) dy \right]}_{= 1} \big| p_{x_{t-1}}(x) - \hat{p}_{x_{t-1}}(x) \big| dx \\
&= \frac{1}{2} \int_{\mathbb{R}^d} \big| p_{x_{t-1}}(x) - \hat{p}_{x_{t-1}}(x) \big| dx \\
&= \text{TV}\left( \mathbb{P}_{x_{t-1}}, \hat{\mathbb{P}}_{x_{t-1}} \right)
\end{split}
\end{equation*}

where the fourth line uses the triangle inequality, and the fifth consists of a change in the integration order applying Fubini's theorem.

\subsection{Proof of Corollary \ref{thm:gaussian-case}}

From the definition of total variation, we have that
$$
s \big( x, x^{(k)}_{t-1} \big) = \text{TV} \big( \mathcal{N}(f(x), \Sigma_{\varepsilon}), \mathcal{N}(f(x^{(k)}_{t-1}), \Sigma_{\varepsilon}) \big).
$$
It then follows from \cite{devroye2018total} (or Theorem 1 in \cite{barsov1987estimates}), that  
\begin{equation*}
\begin{split}
s& \big( x, x^{(k)}_{t-1} \big) \\ 
& = \mathbb{P} \left( \mathcal{N}(0, 1) \in \left[ - \sqrt{2} h \big( x, x^{(k)}_{t-1} \big), \sqrt{2} h \big( x, x^{(k)}_{t-1} \big) \right] \right) \\
& = \Phi \left( \sqrt{2} h \big( x, x^{(k)}_{t-1} \big) \right) - \Phi \left( -\sqrt{2} h \big( x, x^{(k)}_{t-1} \big) \right) \\
& = 2 \Phi \left( \sqrt{2} h \big( x, x^{(k)}_{t-1} \big) \right) - 1  = \text{erf} \left( h \big( x, x^{(k)}_{t-1} \big) \right)
\end{split}
\end{equation*}

\subsection{Proof of Theorem \ref{thm:convergence}}

\begin{proof}
From Theorem \ref{thm:general-case} we have that $$\text{TV} \left( \mathbb{P}_{s}, \hat{\mathbb{P}}_{s} \right) \leq \sum_{t=1}^{s} \sum_{k=1}^{K_{t-1}} \max_{x \in \mathcal{X}^{(k)}_{t-1}} s \big( x, x^{(k)}_{t-1} \big) \hat{\mathbb{P}}_{x_{t-1}} \big( \mathcal{X}^{(k)}_{t-1} \big). $$

Consequently, it is enough to prove that for any $\bar\epsilon >0$, for any $t$, there exists a partition of $\mathbb{R}^d =\{\mathcal{X}^{(1)},...,\mathcal{X}^{(K+1)} \}$ in $K + 1$ sets such that 
\begin{equation}
\sum_{k=1}^{K + 1} \max_{x \in \mathcal{X}^{(k)}} s \big( x, x^{(k)} \big) \hat{\mathbb{P}}_{x_{t-1}} \big( \mathcal{X}^{(k)} \big) \leq \bar\epsilon.
\end{equation}
In fact, we can then conclude by taking $\bar\epsilon < \frac{\epsilon}{t}.$
In order to do that, we first consider a hypercube $\mathcal{S}$ of diameter $r$ w.r.t. the infinity norm containing at least $1-\frac{\bar\epsilon}{2}$ probability mass of $\hat{\mathbb{P}}_{x_{t-1}}$, which is guaranteed to exist as $\hat{\mathbb{P}}_{x_{t-1}}$ is almost surely bounded by assumption.  We then select $\mathcal{X}^{(K+1)} = \mathbb{R}^d \setminus \mathcal{S}$ and let $\{\mathcal{X}^{(1)},...,\mathcal{X}^{(K)} \}$ be a uniform partition of $\mathcal{S}$ in $K$ hypercubic regions each of diameter $\frac{r}{\sqrt[d]{K}}$. What is left to show is that there exists $\bar{K}$ such that for any $K\geq \bar{K}$

\begin{align}
\sum_{k=1}^{K}  \max_{x \in \mathcal{X}^{(k)}} s \big( x, x^{(k)} \big) \hat{\mathbb{P}}_{x_{t-1}} \big( \mathcal{X}^{(k)} \big) \leq \frac{\bar \epsilon}{2}.
\end{align}

Now, note that:

\begin{align*}
 \sum_{k=1}^{K} &  \max_{x \in \mathcal{X}^{(k)}} s \big( x, x^{(k)} \big)  \hat{\mathbb{P}}_{x_{t-1}} \big( \mathcal{X}^{(k)} \big) \\  
& \leq \max_{ x \in \mathcal{X}^{(k)}} \max_{\bar{x}: \|x-\bar{x}\|_{\infty} \leq \frac{r}{\sqrt[d]{K}}} s \big( x, \bar{x} \big)  \sum_{k=1}^K \hat{\mathbb{P}}_{x_{t-1}} \big( \mathcal{X}^{(k)} \big) \\
& \leq \max_{ x \in \mathcal{X}^{(k)}} \max_{\bar{x}: \|x-\bar{x}\|_{\infty} = \frac{r}{\sqrt[d]{K}}} s \big( x, \bar{x} \big) \cdot 1.
\end{align*}

To conclude it is enough to note that, for any $x \in \mathcal{X}^{(k)}$, if $p(y \mid \bar{x}) \rightarrow p(y \mid x)$ when $K \rightarrow \infty$ for every $y$ a.s., then it holds that
$$ \lim_{K\to \infty} \left( \max_{\bar{x}: \|x-\bar{x}\|_{\infty} \leq \frac{r}{\sqrt[d]{K}}} s \big( x, \bar{x} \big)  \right) = 0, $$
since $\lim_{K\to \infty} s \big( x, \bar{x} \big)  = 0$ due to Scheffe's Lemma \cite{driver2007math} (Lemma 13.3).
\end{proof}

\section{CONCLUSION}
We introduced a framework to approximate the distribution of a non-linear stochastic dynamical system over time with a mixture of distributions. By deriving bounds for the Total Variation distance, we showed that for each time step, the resulting approximating mixture distribution is guaranteed to be $\delta-$close to the distribution of the system. On a set of experiments, we showed the efficacy of our framework. In our view, there are several opportunities for improvement of the current framework, such as i) mixture compression (reducing the number of components in the approximating mixture with formal guarantees) to decrease the computational complexity, ii) bound tightening (in Theorem \ref{thm:general-case}, we are working on ways to improve the bound without resorting to the maximization), and iii) extension of the formal guarantees to other metrics on probability distributions, such as the Wasserstein distance.

\section{ACKNOWLEDGEMENTS}

L.L. and E.F. are partially supported by the NWO (grant OCENW.M.22.056).


\bibliographystyle{IEEEtran}

\addtolength{\textheight}{-3cm}   


\end{document}